\definecolor{mycolor1}{rgb}{0.105882,0.619608,0.466667}
\definecolor{mycolor2}{rgb}{0.85098,0.372549,0.00784314}
\definecolor{mycolor3}{rgb}{0.458824,0.439216,0.701961}
\definecolor{mycolor4}{rgb}{0.905882,0.160784,0.541176}
\definecolor{mycolor5}{rgb}{0.4,0.65098,0.117647}
\definecolor{mycolor6}{rgb}{0.65098,0.462745,0.113725}
\definecolor{mycolor7}{rgb}{0.901961,0.670588,0.00784314}
\definecolor{mycolor8}{rgb}{0.4,0.4,0.4}
\definecolor{mycolor9}{rgb}{0.301961,0,0.294118}
\definecolor{mycolor10}{rgb}{0.0313725,0.25098,0.505882}
\newif\ifmygrid@coordinates
\tikzset{/mygrid/step line/.style={line width=0.80pt,draw=gray!80},
         /mygrid/steplet line/.style={line width=0.25pt,draw=gray!80}}
\def\mygrid@def@coordinates(#1,#2)(#3,#4){%
    \def\mygrid@xlo{#1}%
    \def\mygrid@xhi{#3}%
    \def\mygrid@ylo{#2}%
    \def\mygrid@yhi{#4}%
}
\newcommand\DrawGrid[3][]{%
    \pgfkeys{/mygrid/.cd,coordinates=true,step=1,steplet=0.2,#1}%
    \draw[/mygrid/steplet line] #2 grid[step=\mygrid@steplet] #3;
    \draw[/mygrid/step line] #2 grid[step=\mygrid@step] #3;
    \mygrid@def@coordinates#2#3%
    \ifmygrid@coordinates%
        \draw[/mygrid/step line]
        \foreach \xpos in {\mygrid@xlo,...,\mygrid@xhi} {%
          (\xpos,\mygrid@ylo) -- ++(0,-3pt)
                              node[anchor=north] {$\xpos$}
        }
        \foreach \ypos in {\mygrid@ylo,...,\mygrid@yhi} {%
          (\mygrid@xlo,\ypos) -- ++(-3pt,0)
                              node[anchor=east] {$\ypos$}
        };
    \fi%
}
\newcommand{\remove}[1]{}
\newcommand{\removesafe}[1]{}
\newcommand{\argmax}[1]{\underset{#1}{\operatorname{argmax}}}
\newcommand{\1}{\mathbbm{1}}
\newcommand{\floor}[1]{\lfloor #1 \rfloor}
\DeclareMathOperator*{\sign}{sgn}
\DeclareMathOperator*{\supp}{supp}
\newcommand{\xo}{x_0}
\newcommand{\R}{\mathbb{R}}
\newcommand{\etap}{\eta^+}
\newcommand{\etam}{\eta^-}
\newcommand{\Omegap}{\Omega^+}
\newcommand{\Omegam}{\Omega^-}
\newcommand{\Omegac}{\Omega^c}
\newcommand{\eps}{\varepsilon}
\newcommand{\inddef}{\mathbbm{1}_{|\langle a_i, \xo \rangle| \leq 3}}
\newcommand{\inddefwithnorm}{\mathbbm{1}_{|\langle a_i, \xo \rangle| \leq 3 \| \xo \|_2}}
\newcommand{\ind}{\mathbbm{1}_i}
\newcommand{\aixo}{\langle a_i, \xo \rangle}
\newcommand{\aixone}{\langle a_i, x_1 \rangle}
\newcommand{\aih}{\langle a_i, h \rangle}
\newcommand{\xoxohh}{\|\xo\|_2 \|h\|_2}
\newcommand{\xtilde}{\tilde{x}}
\newcommand{\etilde}{\tilde{e}}
\newcommand{\A}{\mathcal{A}}
\newcommand{\Axo}{\A_{\xo}}
\newcommand{\PP}{\mathbb{P}}
\newcommand{\aiai}{a_i a_i^\intercal}
\newcommand{\zetaizetai}{\zeta_i \zeta_i^\intercal}
\newcommand{\sigmin}{\sigma_\text{min}}
\newcommand{\sigmax}{\sigma_\text{max}}
\newtheorem{theorem}{Theorem} 
\newtheorem{lemma}[theorem]{Lemma}
\newtheorem{proposition}[theorem]{Proposition}
\title{Corruption Robust Phase Retrieval via Linear Programming }
\author{
Paul Hand\footnote{Department of Computational and Applied Mathematics, Rice University, TX.} \   and Vladislav Voroninski\footnote{Helm.ai, CA}}
\begin{document}

\maketitle
\abstract{
We consider the problem of phase retrieval from corrupted magnitude observations. In particular we show that a fixed $x_0 \in \mathbb{R}^n$ can be recovered \emph{exactly} from corrupted magnitude measurements $|\langle a_i, x_0 \rangle | + \eta_i, \quad i =1,2\ldots m$ with high probability for $m = O(n)$, where  $a_i \in \mathbb{R}^n$ are i.i.d standard Gaussian and $\eta \in \mathbb{R}^m$ has fixed sparse support and is otherwise arbitrary, by using a version of the PhaseMax algorithm augmented with slack variables subject to a penalty.  This linear programming formulation, which we call RobustPhaseMax, operates in the natural parameter space, and our proofs rely on a direct analysis of the optimality conditions using concentration inequalities.
}

\section{Introduction}
Recovering signals from corrupted data is a prevalent problem in modern applied science. For example, data corruption may be due to sensor failure, adversarial tampering as in cybersecurity, or gross errors that are inherent to necessary preprocessing as in the structure from motion problem in computer vision. Appropriately handling such corrupted data is a challenging aspect of algorithm design, and falls broadly speaking under the umbrella of robust statistics \cite{huber2011robust}. A prominent example is that of principal component analysis, where standard solutions based on singular value decompositions are optimal under Gaussian noise but can be arbitrarily skewed by a single corrupted data point. Much work has focused on formulating robust versions of PCA, culminating in breakthrough results by Candes et al. \cite{RPCA}, which exploit concepts from compressed sensing and matrix completion to formulate a convex program that operates in the space of matrices to decouple low rank and sparse structures, the latter of which model corruptions. In addition to its favorable theoretical properties, the robust PCA framework has found much empirical success in a wide variety of applications.  There has also been work on corruption and erasure robustness in other linear settings such as compressed sensing and matrix completion \cite{Li2013, voroninski2016strong, CJSC2013}.

Beyond the linear model setting, there are many situations in which signal recovery must be performed from corrupted nonlinear observations. One such scenario is phase retrieval, in which corrupted data is a natural byproduct of some of its most promising applications, such as Xray Free Electron Laser high energy imaging.  In this imaging scenario a stream of identical particles is blasted by very high intensity lasers, yielding a multitude of highly noisy diffraction images of random rotations of the particle \cite{xfel1, xfel2}. Thus, the task at hand is recovering information from redundant yet highly corrupted data.  We introduce here a linear program for phase retrieval in the natural parameter space called RobustPhaseMax, which we show recovers real vectors exactly with high probability from $O(n)$ random Gaussian magnitude measurements, $\Omega(n)$ of which are arbitrarily corrupted. The RobustPhaseMax formulation is highly amenable to analysis, and our proofs rely on a simple characterization of its optimality conditions and standard concentration inequalities.

The first theoretical guarantees for stable and efficient phase retrieval at optimal sample complexity were for the PhaseLift algorithm, which is a convex program in a lifted space of matrices \cite{CL2012, CSV2013}. Subsequently, an L1-aware version of PhaseLift was shown to be robust to noise in \cite{CL2012}, and to an $O(1)$ fraction of arbitrary outliers, in addition to noise, in \cite{phaseliftrobust}. While having favorable theoretical properties, the PhaseLift framework is not computationally tractable in high dimensional regimes due to the squaring of the natural dimensionality of the problem at hand. 

Meanwhile, lifting to higher dimensional spaces is not necessary for successful algorithm design for non-linear inverse problems --- even for those that require  robustness to outliers. For instance, nonlinear corruption-robust recovery in the natural parameter space was recently achieved in computer vision by the empirical success of the LUD algorithm for location recovery from corrupted relative directions \cite{Amit}. It was shown by \cite{ShapeFit} that an alternative convex program called ShapeFit, recovers locations exactly from corrupted relative directions under broad conditions on the locations and the graph of relative direction observations. This location recovery problem can be interpreted as a robust PCA problem under projective ambiguity in each observation. That is, if we observed the pairwise distances as well as the relative directions, the problem would be an instance of structured rank-2 robust PCA problem. Remarkably, instead of lifting to a space of matrices, ShapeFit operates in the natural parameter space and still achieves the goals of robust PCA, even while subject to projective ambiguity in each observation and adversarial corruptions. 

Indeed, corruption-robust recovery has also been achieved for phase retrieval in the natural parameter space in \cite{medianwirtingerflow}, by modifying the Wirtinger Flow framework of Candes et al. \cite{wirtinger,cai2015optimal}. Unlike ShapeFit, which consists of a single convex program, Wirtinger Flow approaches rely on a spectral initialization, followed by a non-convex optimization program. 

Recently, a linear programming formulation for phase retrieval, called PhaseMax, was independently discovered by \cite{phasemax, phasemaxJustin}. When provided with an anchor vector that correlates with the desired signal,  PhaseMax has been shown to recover signals from magnitude measurements with high probability.  Several recovery theorems exist for PhaseMax, some involving tools from statistical learning theory \cite{phasemaxJustin}, geometric probability \cite{phasemax}, and elementary probabilistic concentration arguments \cite{phasemaxelementary}.  A sparsity-aware version called SparsePhaseMax recovers sparse signals at optimal sample complexity when properly initialized \cite{sparsephasemax}.  
While noise stability for PhaseMax has been established \cite{phasemax, phasemaxJustin}, the assumptions on the noise are in the infinity norm, which are not realistic for Gaussian or even Poisson noise. At first glance, it isn't clear whether the PhaseMax framework can accommodate more realistic noise models due to the structure of its feasible set. Namely, the feasible set of PhaseMax can shrink very rapidly into a subset of a small ball around the origin when $O(n)$ measurements  significantly underestimate their true value, as may occur in the regime of highly redundant yet corrupted data. In this paper, we introduce a outlier-tolerant algorithm called RobustPhaseMax.  It consists of augmenting the PhaseMax linear program with non-negative slack variables that are subject to a penalty on their sum.  

One way to get an anchor vector for RobustPhaseMax is through the 
initializer provided by the Median-Truncated Wirtinger Flow work of \cite{medianwirtingerflow}, which succeeds with high probability under $m = O(n)$ Gaussian measurements and a constant fraction of corruptions.   Thus, our results, together with the initializer of \cite{medianwirtingerflow}, guarantee exact phase retrieval from $O(n)$ corrupted measurements. This is the first result of its kind for corruption-robust phase retrieval in two respects. Firstly, RobustPhaseMax is a convex program in the natural parameter space. Secondly, the best previously published result that operates in the natural parameter space is that of \cite{medianwirtingerflow}, which relies on $O(n \log n)$ measurements due to the post-initialization step of the Median-Truncated Wirtinger Flow pipeline. In sum, our work establishes that despite the non-linear nature of magnitude measurements, exact phase retrieval can be performed robustly to corruptions in the natural parameter space via linear programming, at optimal sample complexity.

\section{Main result}

Let $\xo \in \R^n$.  Let $a_i \sim \mathcal{N}(0, I_{n \times n})$ for $i=1 \ldots m$.  Let $b_i = | \langle a_i, \xo \rangle | + \eta_i$, where $\eta\in\R^m$ is a sparse vector of  corruptions of arbitrary magnitude and sign.  If an anchor vector $\phi$ is known to approximate the signal $\xo$, then we consider recovering $\xo$ by the following linear program, called RobustPhaseMax:
\begin{equation}\label{RPM}
\begin{array}{ll}
\max & \langle \phi, x \rangle - \lambda \langle 1, e \rangle \\[.5em]
\text{s.t.} & -b_i - e_i \leq \langle a_i, x \rangle \leq b_i + e_i, \quad e_i \geq 0, \quad   i = 1\ldots  m, \quad x \in \mathbb{R}^n, \ e \in \R^m.\\ 
\end{array}
\end{equation}
The variables $\{e_i\}_{i=1}^m$ are slack variables, $1$ represents the $m$-vector of all ones, and $\lambda > 0$.  We show that this linear program recovers $\xo$ exactly with high probability from $O(n)$ Gaussian measurements, of which an adversarially chosen $O(1)$-fraction are subject to arbitrary corruptions, provided that $\phi$ is sufficiently accurate and $\lambda = \Theta(\|\xo\|_2/m)$.

\begin{theorem} \label{thm:rpm}
For any $M\geq 7$, there exist  positive  $c, \gamma, \delta$ such that the following holds.  Let $\xo \in \R^n$.  Let $\phi \in \R^n$ be such that $\|\phi - \xo\|_2 < 0.5 \|\xo\|_2$.    Let $a_i \sim \mathcal{N}(0, I_{n \times n})$ be independent for $i = 1 \ldots m$.   If  $m \geq c n$ and $\lambda \in \left[ \frac{7}{m} \|\xo\|_2, \frac{M}{m} \|\xo\|_2\right]$, then with probability at least $1 - 9 m e^{-\gamma m}$, it holds that, simultaneously  for  all $\eta \in \R^m$ such that $| \supp (\eta) | \leq \delta m$, $(\xo, -\etam)$ is the unique maximizer of \eqref{RPM} for $b_i = | \aixo| + \eta_i$, $i = 1\ldots m$.  Here, $\etam  = \min(\eta, 0)$.

\end{theorem}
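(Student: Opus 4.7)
The plan is to verify optimality of $(x_0, -\eta^-)$ for \eqref{RPM} by a direct analysis of the objective change along any feasible perturbation. Since the objective is monotone decreasing in each slack, one may take $e_i^*(h) := \max\{0, |\langle a_i, x_0 + h\rangle| - b_i\}$ at optimum, and so it suffices to show
\begin{equation*}
\Delta(h) := \langle \phi, h\rangle - \lambda \sum_{i=1}^m \bigl(e_i^*(h) - e_i^*(0)\bigr) < 0 \quad \text{for every } h \neq 0,
\end{equation*}
simultaneously for all $\eta$ with $|\supp(\eta)| \leq \delta m$. Writing $u_i = \langle a_i, x_0\rangle$, $v_i = \langle a_i, h\rangle$, $s_i = \sign(u_i)$, $T = \supp(\eta)$, a short case check yields the deterministic per-index bounds $e_i^*(h) - e_i^*(0) \geq (s_i v_i)_+$ for $i \notin T$ (equality outside the sign-flip region) and $e_i^*(h) - e_i^*(0) \geq -|v_i|$ for $i \in T$ (via $|u_i + v_i| \geq |u_i| - |v_i|$ together with $e_i^*(0) = (-\eta_i)_+$). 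Replacing the good-index sum by the full sum absorbs another $|T|$ term:
\begin{equation*}
\sum_{i=1}^m \bigl(e_i^*(h) - e_i^*(0)\bigr) \;\geq\; \sum_{i=1}^m (s_i v_i)_+ - 2\sum_{i\in T} |v_i|.
\end{equation*}

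The randomness is then controlled by two uniform estimates over the unit sphere: a lower bound $\sum_i (s_i\langle a_i, h\rangle)_+ \geq m\,\mu(h) - \tau m \|h\|_2$ for a small constant $\tau > 0$, where $\mu(h) := \mathbb{E}(s_i v_i)_+$; and an adversarial upper bound $\sup_{|T| \leq \delta m} \sum_{i \in T} |\langle a_i, h\rangle| \leq \omega(\delta)\, m \|h\|_2$ with $\omega(\delta) \to 0$ as $\delta \to 0$, obtained by controlling the sum of the top $\delta m$ order statistics of sub-Gaussian $|\langle a_i, h\rangle|$. Both bounds lift to all unit $h$ via an $\varepsilon$-net of $S^{n-1}$ together with Bernstein-type deviations, and the $(1/\varepsilon)^n$ covering factor is absorbed by $e^{-\gamma m}$ once $m \geq c n$ (each such event is union-bounded against the $9me^{-\gamma m}$ budget in the theorem). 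The key structural input for $\mu$ is the distributional identity $s_i a_i \stackrel{d}{=} |\alpha|\,\hat x_0 + \Pi_{\hat x_0^\perp}\, a$ with $\alpha \sim \mathcal{N}(0,1)$ and $a \sim \mathcal{N}(0, I)$ independent, giving $2\mu(h) = \sqrt{2/\pi}\,\langle \hat x_0, h\rangle + \mathbb{E}|s_i v_i|$; the quantity $\mathbb{E}|s_i v_i|$ is bounded below either by $\|h\|_2/\sqrt 3$ (fourth-moment Cauchy--Schwarz, using $\mathbb{E}(s_iv_i)^2 = \|h\|_2^2$ and $\mathbb{E}(s_iv_i)^4 = 3\|h\|_2^4$) or by $\sqrt{2/\pi}\,\|h - \langle \hat x_0, h\rangle \hat x_0\|_2$ (conditioning on $|\alpha|$).

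To close, apply the anchor bound $\langle \phi, h\rangle \leq \|x_0\|_2 \langle \hat x_0, h\rangle + 0.5\|x_0\|_2\|h\|_2$ and split on $h_\parallel := \langle \hat x_0, h\rangle$. When $h_\parallel \geq -\|h\|_2/\sqrt 2$, the linear lower bound on $\mu(h)$ combined with $\lambda m \geq 7\|x_0\|_2$ makes the coefficient of $h_\parallel$ in the resulting bound on $\Delta(h)$ equal to $1 - \tfrac{7}{2}\sqrt{2/\pi} < 0$ and makes the coefficient of $\|h\|_2$ also negative provided $\tau$ and $\omega(\delta)$ are small. When $h_\parallel < -\|h\|_2/\sqrt 2$, the trivial estimate $\mu(h) \geq 0$ already suffices, because the anchor term is itself bounded by $-(1/\sqrt 2 - 0.5)\|x_0\|_2\|h\|_2$, which dominates the adversarial loss $2\lambda m\,\omega(\delta)\|h\|_2 \leq 2M \omega(\delta)\|x_0\|_2\|h\|_2$ once $\delta$ is sufficiently small in terms of $M$. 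Strict inequality for $h \neq 0$ gives uniqueness. The main obstacle is the uniform lower bound for $\sum_i (s_i\langle a_i, h\rangle)_+$ with sharp constants: balancing the parallel and perpendicular contributions to $\mu(h)$ while simultaneously handling the adversarial support $T$ at sample complexity $m = O(n)$ is the crux. The Lipschitz structure of the positive-part function helps the covering argument, but the two-regime decomposition on $h_\parallel$ is essential to keep the constants small enough to absorb the $0.5\|x_0\|_2\|h\|_2$ anchor slack.
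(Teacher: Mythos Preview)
Your approach is correct but takes a genuinely different route from the paper. The paper never isolates $\sum_i (s_i\langle a_i,h\rangle)_+$; instead it keeps an arbitrary feasible perturbation $(h,g)$, introduces a truncation indicator $\mathbbm{1}_i = \mathbbm{1}_{|\langle a_i,x_0\rangle|\le 3\|x_0\|_2}$, and rewrites $\langle x_0,h\rangle$ through an approximate isometry $\tfrac{1}{|\Omega^c|}\sum_{i\in\Omega^c}\mathbbm{1}_i\,a_ia_i^\top \approx I$. Combining feasibility with the truncation collapses the main term to $-\tfrac{1}{|\Omega^c|}\sum_{i\in\Omega^c}\mathbbm{1}_i\,|\langle a_i,x_0\rangle||\langle a_i,h\rangle|$, which is then lower bounded uniformly by $0.55\|x_0\|_2\|h\|_2$ via an explicit expectation calculation and an $\varepsilon$-net; no split on $h_\parallel$ is needed, but the truncated isometry must hold uniformly over all $\Omega$. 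Your reduction to the optimal slack $e^*(h)$ is cleaner than carrying a free $g$, and only the adversarial term $\sum_{i\in T}|\langle a_i,h\rangle|$ (which both proofs share) needs uniformity over $T$. The price you pay is the two-regime case analysis, but that split is in fact avoidable: since $|s_iv_i|=|v_i|$, one has $\mathbb{E}|s_iv_i|=\sqrt{2/\pi}\,\|h\|_2$ exactly (your fourth-moment bound $\|h\|_2/\sqrt3$ throws away information), giving the closed form $\mu(h)=\tfrac12\sqrt{2/\pi}\,(h_\parallel+\|h\|_2)$. Plugging this in yields $\Delta(h)/\|x_0\|_2\le (1-\tfrac72\sqrt{2/\pi})\,h_\parallel + (0.5-\tfrac72\sqrt{2/\pi}+M(\tau+2\omega(\delta)))\|h\|_2$, whose maximum over $h_\parallel\in[-\|h\|_2,\|h\|_2]$ is $(-0.5+M(\tau+2\omega(\delta)))\|h\|_2$, negative once $M(\tau+2\omega(\delta))<0.5$.
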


We note that when recovery occurs under the conditions of this theorem, the slack variables $e$ act to correct only the measurements that are below their true value.  Finally, RobustPhaseMax is equivalent to another natural formulation for corruption-robust phase retrieval as follows.

%

\begin{proposition} \label{prop:equiv}
Let $\phi \in \R^n$, $\lambda \in \R$, $\lambda>0$, $a_i \in \R^n$ for $i = 1 \ldots m$, $b \in \R^m$.  Then  the following are equivalent:
\begin{alignat}{3}
(\xtilde, \etilde) \in  &\argmax{x \in \R^n, \ e \in \R^m} \langle \phi, x \rangle - \lambda \| e\|_1 &&\text{subject to } | \langle a_i, x \rangle | \leq b_i + e_i \label{rpm-l1}, \quad &&i = 1\ldots m,\\
(\xtilde, \etilde) \in &\argmax{x \in \R^n, \ e \in \R^m} \langle \phi, x \rangle - \lambda \langle 1, e \rangle \ &&\text{subject to } | \langle a_i, x \rangle | \leq b_i + e_i, e_i \geq 0, \quad &&i = 1\ldots m. \label{rpm-pos}
\end{alignat}
\end{proposition}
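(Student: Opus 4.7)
My plan is to show that any optimizer of \eqref{rpm-l1} must have non-negative slack variables, after which the two problems coincide as linear programs. The whole equivalence rests on the monotonicity of the constraint $|\langle a_i, x\rangle| \leq b_i + e_i$ in $e_i$: increasing $e_i$ only relaxes it, so there is no benefit to choosing $e_i < 0$ in \eqref{rpm-l1}, while the $\ell^1$ penalty still charges us $\lambda|e_i|$ for it.

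Concretely, given any feasible $(x, e)$ for \eqref{rpm-l1}, I would introduce $e'_i := \max(e_i, 0)$. Since $e'_i \geq e_i$, the bound $|\langle a_i, x\rangle| \leq b_i + e_i \leq b_i + e'_i$ still holds, so $(x, e')$ is feasible for \eqref{rpm-l1} and additionally satisfies $e' \geq 0$. Meanwhile $\|e'\|_1 = \sum_i \max(e_i, 0) \leq \sum_i |e_i| = \|e\|_1$, with strict inequality whenever some $e_i < 0$. Because $\lambda > 0$, the objective $\langle \phi, x\rangle - \lambda \|e\|_1$ weakly increases when passing from $(x, e)$ to $(x, e')$, strictly so if $e$ has any negative entry. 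Hence an optimizer $(\xtilde, \etilde)$ of \eqref{rpm-l1} must satisfy $\etilde \geq 0$, for otherwise $(\xtilde, \etilde')$ would strictly improve the objective.

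Once $\etilde \geq 0$ is in hand, the equivalence is immediate. On the region $\{e \in \R^m : e \geq 0\}$ the constraint sets of \eqref{rpm-l1} and \eqref{rpm-pos} coincide, and there $\|e\|_1 = \langle 1, e\rangle$, so the two objectives also agree. Thus any argmax of \eqref{rpm-l1} is an argmax of \eqref{rpm-pos}, and conversely every feasible point of \eqref{rpm-pos} is a feasible point of \eqref{rpm-l1} with the same objective value, so no point of \eqref{rpm-pos} can do better in \eqref{rpm-l1} than its optima. The argmax sets are therefore equal. There is no real obstacle; the only subtlety worth flagging is that nothing is assumed about the sign of $b_i$ (when $\eta_i$ is a large negative outlier, $b_i$ can be negative), but this is harmless since the map $e_i \mapsto \max(e_i,0)$ only slackens the upper bound on $|\langle a_i, x\rangle|$ and so preserves feasibility regardless of the sign of $b_i$.
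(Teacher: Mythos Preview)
Your argument is correct and follows exactly the same idea as the paper: replace $\etilde$ by $\max(0,\etilde)$, observe that feasibility is preserved while the objective strictly increases unless $\etilde \geq 0$, and conclude that the argmax sets coincide. The paper states this in one sentence; you have simply written out the details more fully.
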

This proposition follows from the fact that if $(\xtilde, \etilde)$ is feasible for the program in \eqref{rpm-l1}, then $(\xtilde, \max(0,\etilde))$ is also feasible and has a strictly greater objective if $\etilde$ has any negative entries. Thus, any $(\xtilde, \etilde)$ that satisfies \eqref{rpm-l1} is such that $\etilde \geq 0$.


\section{Proofs}
\begin{proof}[Proof of Theorem \ref{thm:rpm}]
Let $\Omegap = \{ i \mid \eta_i > 0\}$, $\Omegam = \{ i \mid \eta_i < 0\}$, $\Omega = \Omegap \cup \Omegam$,  and $\eta = \etap + \etam$ with $\etap = \max(\eta, 0)$ and $\etam = \min(\eta, 0)$.
Consider a feasible point $(\xo + h, -\etam + g)$.  To show $(\xo, -\etam)$ is the unique solution, it suffices to show that for any feasible perturbation $(h,g) \neq (0, 0)$, 
\begin{align}
\langle \phi, h \rangle - \lambda \langle 1, g \rangle < 0. \label{objective-descent}
\end{align}
Because of the feasibility of the perturbation $(h,g)$, 
\[
\sign ( \langle a_i, \xo \rangle ) \langle a_i, \xo + h \rangle \leq | \aixo | + \etap_i + \etam_i - \etam_i + g_i =  | \langle a_i, \xo \rangle | + \etap_i + g_i,
\]
which implies 
\begin{alignat}{2}
&\sign (\langle a_i, \xo \rangle) \langle a_i, h \rangle \leq g_i \quad &&\text{for all } i \in \Omegac \cup \Omegam. \label{feasibility-condition}
\intertext{
Define $S := \{ i \mid \langle a_i, \xo \rangle \langle a_i, h \rangle > 0 \}$.   We will need the following special case of \eqref{feasibility-condition}:
}
&g_i > 0 \text{ and } | \langle a_i, h \rangle | \leq g_i \quad &&\text{for all } i \in  S \cap (\Omegac \cup \Omegam). \label{feas-cond-S}
\end{alignat}
By the feasibility condition $e_i \geq 0, i =1\ldots m$, we also have 
\begin{align}
g_i \geq 0 \text{ for } i \in \Omegac \cup \Omegap. \label{gpositive}
\end{align}

We begin by establishing \eqref{objective-descent}  if $h=0$ and $g \neq 0$.  In this case, note that 
$-\lambda \langle 1, g \rangle  = -\lambda \| g\|_1 < 0$,
where the  equality follows because of \eqref{gpositive} and the combination of \eqref{feasibility-condition} with $h=0$. 

In the remainder of this proof, we assume $h \neq 0$ and establish \eqref{objective-descent}.  Write
\begin{align}
\langle \phi, h\rangle - \lambda \langle 1, g \rangle &= \langle \xo, h \rangle - \lambda \sum_{i \in \Omegac} g_i - \lambda \sum_{i \in \Omegam} g_i - \lambda \sum_{i \in \Omegap} g_i + \langle \phi - \xo , h \rangle \notag\\
&\leq \underbrace{\langle \xo, h \rangle - \lambda \sum_{i \in \Omegac} g_i}_{I} \underbrace{- \lambda \sum_{i \in \Omegam} g_i}_{II}  + \langle \phi - \xo , h \rangle, 
\label{terms-to-bound}
\end{align}
where the inequality follows by \eqref{gpositive}.  

First, we bound term $I$.   Let $\ind = \inddefwithnorm$ be the indicator of the event on which  $| \aixo| \leq 3 \| \xo\|_2$.   By Lemma \ref{lem:isometry-allomega-withind}, there exist positive $\delta, c_0, \gamma$ such that if $|\Omega| \leq \delta m$ and if $m \geq 2 c_0 n$, 
\begin{align}
\langle \xo, h \rangle &= \left \langle \frac{1}{|\Omegac|} \sum_{i \in \Omegac} \ind \cdot a_i a_i^\intercal, h \xo^\intercal \right \rangle - \left \langle \frac{1}{|\Omegac|} \sum_{i\in \Omegac} \ind \cdot a_i a_i^\intercal - I_{n\times n} , h \xo^\intercal \right \rangle \notag \\
&\leq \frac{1}{|\Omegac|} \sum_{i \in \Omegac} \ind \cdot \langle a_i, \xo \rangle \langle a_i, h\rangle + 0.04 \| \xo \|_2 \|h\|_2, \label{isometry-bound}
\end{align}
on an event with probability at least $1 - 2 m e^{-\gamma m/4}$.   We write
\begin{align}
I &= \langle \xo, h \rangle - \lambda \sum_{i \in \Omegac} g_i \notag\\
& \leq \frac{1}{|\Omegac|} \sum_{i \in \Omegac} \ind \cdot \aixo \aih - \lambda \sum_{i \in \Omegac} g_i + 0.04 \xoxohh  \notag\\
& = \frac{1}{|\Omegac|} \sum_{i \in \Omegac \cap S^c} \ind \cdot \aixo \aih + \frac{1}{|\Omegac|} \sum_{i \in \Omegac \cap S} \ind \cdot \aixo \aih - \lambda \sum_{i \in \Omegac} g_i + 0.04 \xoxohh  \notag\\
& \leq \frac{1}{|\Omegac|} \sum_{i \in \Omegac \cap S^c} -\ind \cdot |\aixo||\aih|  + \frac{1}{|\Omegac|} \sum_{i \in \Omegac \cap S} \ind \cdot \aixo \aih - \lambda \sum_{i \in \Omegac \cap S} g_i + 0.04 \xoxohh  \notag\\
& \leq \frac{1}{|\Omegac|} \sum_{i \in \Omegac \cap S^c} -\ind \cdot |\aixo||\aih|  + \frac{1}{|\Omegac|} \sum_{i \in \Omegac \cap S} \ind \cdot |\aixo| g_i - \lambda \sum_{i \in \Omegac \cap S} g_i + 0.04 \xoxohh  \notag\\
& \leq \frac{1}{|\Omegac|} \sum_{i \in \Omegac \cap S^c} -\ind \cdot |\aixo||\aih|  - \frac{1}{|\Omegac|} \sum_{i \in \Omegac \cap S} \ind \cdot |\aixo| g_i + 0.04 \xoxohh  \notag\\
& \leq \frac{1}{|\Omegac|} \sum_{i \in \Omegac \cap S^c} -\ind \cdot |\aixo||\aih|  - \frac{1}{|\Omegac|} \sum_{i \in \Omegac \cap S} \ind \cdot |\aixo| |\aih| + 0.04 \xoxohh  \notag\\
&= -\frac{1}{|\Omegac|} \sum_{i \in \Omegac} \ind \cdot |\aixo||\aih| + 0.04 \xoxohh,  \label{aixoaihxoxohh}
\end{align}
where the second line follows from \eqref{isometry-bound};  the fourth line uses the fact that $\aixo \aih \leq 0$ on $S^c$ and the fact that $g_i \geq 0$ for $i \in \Omegac$; the fifth line uses the feasibility condition \eqref{feasibility-condition};  the sixth line uses the fact that $g_i \geq 0$ for $i \in \Omegac$, the  assumption that $\lambda \geq \frac{7}{m} \|\xo\|_2$, which implies $\lambda  \geq \frac{6}{|\Omegac|} \| \xo\|_2$ for $\delta < 1/2$, and the fact that $\ind \cdot | \aixo | \leq \ind \cdot 3 \| \xo \|_2$; and the seventh line uses the feasibility condition \eqref{feas-cond-S}. 
By Lemma \ref{lemma:lower-bound-ind-uniformomega}, if $m \geq 2 c_0 n$ for some $c_0$, then on an event with probability at least $1-5 m e^{-\gamma m/4}$, 
\begin{align*}
\frac{1}{|\Omegac|} \sum_{i\in\Omegac} \ind \cdot |\aixo| |\aih| \geq 0.55 \xoxohh.
\end{align*}
By \eqref{aixoaihxoxohh}, we thus have 
\begin{align}
I \leq -0.51 \xoxohh. \label{termi}
\end{align}

Second, we now bound term $II$.
\begin{align*}
II = -\lambda \sum_{i \in \Omegam} g_i  &\leq -\lambda \sum_{i \in \Omegam} \sign(\aixo) \aih \leq \lambda \sum_{i \in \Omegam} |\aih|,
\end{align*}
where the first inequality follows by \eqref{feasibility-condition}.
By Lemma \ref{lem:aiho-concentration-uniformomega}, if $m \geq \frac{n}{\delta}$ for some $\delta$, and if $|\Omegam| \leq \delta m$, then on an event with probability at least $1 - 2 e^{-\delta m}$, 
\begin{align*}
\sum_{i \in \Omegam} | \aih | \leq (\sqrt{4 + 2 \log(1/\delta)} + 2)  \delta  m \cdot \|h\|_2.
\end{align*}
Using $\lambda \leq \frac{M}{m} \| \xo \|_2$, and choosing $\delta$ small enough so that $M \left(\sqrt{4 + 2 \log(1/\delta)} + 2 \right) \delta \leq 0.01$, 
\begin{align}
II \leq \frac{M}{m}\|\xo\|_2 \cdot  \left(\sqrt{4 + 2 \log(1/\delta)} + 2 \right) \delta m \cdot \|h\|_2 \leq 0.01 \xoxohh. \label{termii}
\end{align}

Finally, we conclude from \eqref{terms-to-bound}, \eqref{termi}, \eqref{termii}, and the assumption that $\| \phi - \xo \| < 0.5 \| \xo \|_2$ that
\begin{align*}
\langle \phi, h \rangle - \lambda \langle 1, g \rangle &\leq -0.51 \xoxohh + 0.01 \xoxohh + \| \phi - \xo \|_2 \|h\|_2 < 0.
\end{align*}

\end{proof}

We now prove several technical lemmas used in the proof above.  The first two lemmas concern probabilistic concentration of the eigenvalues of appropriately truncated averages of $a_i a_i^\intercal$ for $i$ belonging to an arbitrarily selected $\Omegac$ of appropriate size.

\begin{lemma} \label{lem:aiai-concentration-ind}
There exist positive $c_0, \gamma$ such that the following holds.  Fix $\xo \in \R^n$.  Let $a_i \sim \mathcal{N}(0, I_{n \times n}), i = 1 \ldots m$ be independent.  If $m \geq c_0 n$, then with probability at least $1 - 2 e^{-\gamma m}$,
\begin{align*}
\left \| \frac{1}{m} \sum_{i=1}^m \inddefwithnorm \cdot \aiai - I_{n \times n}\right \| \leq 0.04.
\end{align*}
\end{lemma}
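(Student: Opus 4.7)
The plan is to split
\[
\frac{1}{m}\sum_{i=1}^m Z_i - I = \big(\mathbb{E}[Z_1] - I\big) + \frac{1}{m}\sum_{i=1}^m \big(Z_i - \mathbb{E}[Z_1]\big), \qquad Z_i := \inddefwithnorm \cdot \aiai,
\]
into a deterministic bias term and a mean-zero fluctuation term, and to control each by roughly $0.03$ and $0.01$ respectively.

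For the bias, I would decompose $a_i = \alpha_i v + \beta_i$ with $v := \xo/\|\xo\|_2$, so that $\alpha_i = \langle a_i, v\rangle \sim \mathcal{N}(0,1)$ and $\beta_i = (I - vv^\intercal)a_i$ is an independent standard Gaussian on $v^\perp$. The truncation indicator becomes $\mathbbm{1}_{|\alpha_i|\leq 3}$, and independence of $\alpha_i$ and $\beta_i$ gives
\[
\mathbb{E}[Z_1] = \mathbb{E}[\alpha^2 \mathbbm{1}_{|\alpha|\leq 3}]\,vv^\intercal + \mathbb{P}(|\alpha|\leq 3)\,(I-vv^\intercal).
\]
Integration by parts yields $\mathbb{E}[\alpha^2 \mathbbm{1}_{|\alpha|\leq 3}] = \mathbb{P}(|\alpha|\leq 3) - 6\phi(3) \approx 0.971$ and $\mathbb{P}(|\alpha|\leq 3) \approx 0.997$, so $\|\mathbb{E}[Z_1] - I\| < 0.03$, leaving about $0.01$ of slack for the fluctuation term.

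For the fluctuation term, I would use an $\epsilon$-net argument on $\mathbb{S}^{n-1}$ rather than matrix Bernstein, because the $Z_i$ are not uniformly bounded in operator norm: the truncation only constrains the $v$-component of $a_i$, while the $\beta_i\beta_i^\intercal$ block is unbounded. For any fixed unit vector $u$, the scalar $u^\intercal Z_i u = \mathbbm{1}_{|\alpha_i|\leq 3}(u^\intercal a_i)^2$ is dominated by $(u^\intercal a_i)^2$, a unit-scale sub-exponential, so Bernstein's inequality gives
\[
\mathbb{P}\!\left(\left|u^\intercal \left(\tfrac{1}{m}\sum_i Z_i - \mathbb{E}[Z_1]\right) u\right| > t\right) \leq 2 e^{-c m t^2}
\]
for a constant $c$ and all sufficiently small $t$. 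Choosing $t$ a small multiple of $0.01$, union-bounding over a $\tfrac{1}{4}$-net of $\mathbb{S}^{n-1}$ of cardinality $\leq 9^n$, and using the standard fact $\|M\| \leq 2\max_{u}|u^\intercal M u|$ over the net for symmetric $M$, promotes this to the desired operator-norm bound with failure probability $2 e^{-\gamma m}$, provided $m \geq c_0 n$ for $c_0$ large enough to absorb the $9^n$ net factor.

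The main obstacle is purely quantitative: the truncation level $3$ is chosen precisely so that the bias sits just below the $0.04$ budget, and one must verify that the sub-exponential Bernstein step applied to $(u^\intercal a_i)^2$ delivers concentration at the $0.01$ scale within the allotted failure probability, tracking the $9^n$ net overhead carefully. No new ideas beyond standard Gaussian concentration appear necessary.
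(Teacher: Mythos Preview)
Your proposal is correct and follows essentially the same route as the paper: both identify the bias $\mathbb{E}[Z_1]$ as the diagonal matrix with entries $\mathbb{E}[\alpha^2\mathbbm{1}_{|\alpha|\leq 3}]\approx 0.971$ and $\mathbb{P}(|\alpha|\leq 3)\approx 0.997$ (contributing $\|\mathbb{E}[Z_1]-I\|<0.03$), and then control the fluctuation by sub-gaussian/sub-exponential concentration plus a net. The only cosmetic difference is that the paper first whitens, setting $\zeta_i = \mathbbm{1}_{|\langle a_i,x_0\rangle|\leq 3\|x_0\|_2}\,L^{-1/2}a_i$ with $L=\mathbb{E}[Z_1]$, and then invokes Vershynin's Theorem~5.39 for matrices with isotropic sub-gaussian rows, whereas you unpack that theorem by hand via the $\varepsilon$-net and Bernstein.
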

\begin{proof}
If $\xo = 0$, then the lemma follows directly from standard concentration estimates of Gaussian matrices, such as Corollary 5.35 in \cite{V2012}.  Consider $\xo \neq 0$.  Without loss of generality, let $\xo = e_1$. Let
\begin{align*}
L= \begin{pmatrix}\alpha & 0 \\ 0 & \beta \cdot I_{n-1 \times n-1} \end{pmatrix} \quad \text{and} \quad L^{1/2} = \begin{pmatrix}\alpha^{1/2} & 0 \\ 0 & \beta^{1/2} \cdot I_{n-1 \times n-1} \end{pmatrix},
\end{align*}
where $\alpha = \mathbb{E}[z^2 \1_{|z| \leq 3}] \approx 0.9707$ and $\beta = \PP[|z| \leq 3] \approx 0.9973$, where $z \sim \mathcal{N}(0, 1)$. 
Let $\zeta_i = \inddefwithnorm L^{-1/2} a_i$.  We have
\begin{align}
\left \| \frac{1}{m} \sum_{i=1}^m \inddefwithnorm \cdot \aiai - I_{n \times n}\right \|  &\leq \left \| L^{1/2}  \Biggl( \frac{1}{m} \sum_{i=1}^m \zetaizetai - I \Biggr) L^{1/2}\right \| + \|L  - I\|  \notag \\
&\leq \left \|    \frac{1}{m} \sum_{i=1}^m \zetaizetai - I  \right \| \|L^{1/2}\|^2 + \| L - I\|  \notag \\
& \leq \left \|   \frac{1}{m} \sum_{i=1}^m \zetaizetai - I  \right \| + 0.03. \label{eq:indaiai}
\end{align}
where the third inequality follows from $\|L^{1/2}\| \leq 1$ and $\|L-I \| = 1-\alpha < 0.03$.  

Observe that $\mathbb{E} [\zetaizetai] = I_{n \times n}$ and let $A$ be the $m \times n$ matrix whose $i$th row is given by $\zeta_i$.   By Theorem 5.39 in \cite{V2012}, there exist positive $C, \gamma$ such that for all $t \geq 0$, with probability at least $1 - 2e^{-\gamma t^2}$, 
\begin{align}
1 - C\sqrt{\frac{n}{m}} - \frac{t}{\sqrt{m}} \leq \sigmin \Bigl(\frac{A}{\sqrt{m}} \Bigr)  \leq  \sigmax \Bigl(\frac{A}{\sqrt{m}} \Bigr)  \leq 1 + C\sqrt{\frac{n}{m}} + \frac{t}{\sqrt{m}}. \notag
 \end{align}
Letting $\delta = C \sqrt{\frac{n}{m}} + \frac{t}{\sqrt{m}}$, we have by Lemma 5.36 in \cite{V2012}, $\|\frac{1}{m} A^\intercal A - I \| \leq 3 \max(\delta, \delta^2)$ with probability at least $1 - 2e^{-\gamma t^2}$. Let $\eps = 0.01/3$.  If $m \geq 4 \frac{C^2}{\eps^2} n$ and $t = \frac{\eps}{2}\sqrt{m}$, we have $\delta \leq \eps$.  Thus, 
\begin{align}
\left \|  \frac{1}{m} \sum_{i=1}^m \zetaizetai - I  \right \| =  \left \|  \frac{1}{m} A^\intercal A - I  \right \| \leq  3 \eps = 0.01, \label{eq:ata-bound}
\end{align}
with probability at least $1 -2 e^{-\gamma  \eps^2 m/4}$. 
The result follows by combining  \eqref{eq:indaiai} and \eqref{eq:ata-bound}.
\end{proof}

\begin{lemma}\label{lem:isometry-allomega-withind}
 Let $a_i \sim \mathcal{N}(0, I_{n \times n})$ for $i = 1 \ldots m$ be independent.  There exist universal constants $\delta, c_0, \gamma>0$ such that if $m \geq 2 c_0 n$ then on an event of probability at least $1- 2 m e^{-\gamma m/4}$, the following holds simultaneously for all $\Omega \subset [m]$ such that $|\Omega| \leq \delta m$:
\begin{align}
\left \| \frac{1}{|\Omegac|} \sum_{i \in \Omegac} \inddefwithnorm \cdot a_i a_i^\intercal  - I_{n \times n} \right \| \leq 0.04.
\end{align}
\end{lemma}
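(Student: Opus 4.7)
The plan is to lift the single-$\Omega$ bound of Lemma \ref{lem:aiai-concentration-ind} to a uniform-in-$\Omega$ bound via a discrete union bound. For any fixed $\Omega \subset [m]$ with $|\Omega| \leq \delta m$ and $\delta \leq 1/2$, the complement satisfies $|\Omega^c| \geq (1-\delta) m \geq m/2 \geq c_0 n$ as soon as $m \geq 2 c_0 n$. Thus the hypothesis of Lemma \ref{lem:aiai-concentration-ind} is met with $|\Omega^c|$ in place of $m$, and applying that lemma to the i.i.d.\ Gaussians $\{a_i\}_{i \in \Omega^c}$ (which remain independent of one another and of $x_0$, with the indicators $\inddefwithnorm$ being functions of $a_i$ alone) yields
\begin{align*}
\left\| \frac{1}{|\Omega^c|} \sum_{i \in \Omega^c} \inddefwithnorm \cdot a_i a_i^\intercal - I_{n \times n} \right\| \leq 0.04
\end{align*}
with probability at least $1 - 2 e^{-\gamma |\Omega^c|} \geq 1 - 2 e^{-\gamma(1-\delta) m}$, where $c_0$ and $\gamma$ are the constants produced by Lemma \ref{lem:aiai-concentration-ind}.

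Next, I would union bound over all admissible $\Omega$. Using the standard binomial estimate
\begin{align*}
\#\{\Omega \subset [m] : |\Omega| \leq \delta m\} = \sum_{k=0}^{\lfloor \delta m \rfloor} \binom{m}{k} \leq 2 m \left(\frac{e}{\delta}\right)^{\delta m},
\end{align*}
the total failure probability is at most
\begin{align*}
4 m \left(\frac{e}{\delta}\right)^{\delta m} e^{-\gamma(1-\delta) m} = 4 m \exp\!\left(- \left[ \gamma(1-\delta) - \delta \ln(e/\delta) \right] m\right).
\end{align*}
I would then fix the universal constant $\delta > 0$ small enough that $\delta \ln(e/\delta) \leq \gamma/4$ and $(1-\delta) \geq 3/4$, which makes the bracketed exponent at least $\gamma / 2$. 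For $m$ larger than a universal constant (which can be absorbed into $c_0$), the prefactor $4m$ can be replaced by $2m$ at the cost of passing from $e^{-\gamma m / 2}$ to $e^{-\gamma m / 4}$, giving the claimed bound $2 m e^{-\gamma m / 4}$ (possibly after shrinking the universal constant $\gamma$).

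The main obstacle is purely combinatorial: one must verify that the enumeration cost $(e/\delta)^{\delta m}$ is dominated by the per-subset exponential gain $e^{-\gamma(1-\delta) m}$. Since $\delta \ln(e/\delta) \to 0$ as $\delta \to 0$, this balance is always achievable for sufficiently small $\delta$, and it is precisely this inequality that pins down the admissible corruption fraction appearing in the lemma. Note that the $\delta$ produced here is independent of, and may differ from, the $\delta$ constraints arising elsewhere in the proof of Theorem \ref{thm:rpm}; the final universal constant used downstream is simply the minimum of all such constraints.
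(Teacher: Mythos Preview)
Your proposal is correct and follows essentially the same approach as the paper's proof: apply Lemma~\ref{lem:aiai-concentration-ind} to each fixed $\Omega^c$ (which has size at least $m/2 \geq c_0 n$ once $\delta < 1/2$ and $m \geq 2c_0 n$), then take a union bound over all admissible $\Omega$, choosing $\delta$ small enough that the combinatorial cost $(e/\delta)^{\delta m}$ is dominated by the per-subset gain $e^{-\gamma m/2}$. The bookkeeping of constants differs only cosmetically from the paper.
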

\begin{proof}
Let $c_0, \gamma$ be given by Lemma \ref{lem:aiai-concentration-ind}.  For $\delta < 1/2$, the assumption $m \geq 2 c_0 n$ implies $|\Omegac| \geq m/2 \geq c_0 n$.  By Lemma \ref{lem:aiai-concentration-ind}, for any fixed $|\Omega|$ with $|\Omegac|  \geq c_0 n$,
$$\left \| \frac{1}{|\Omegac|} \sum_{i \in \Omegac} \inddefwithnorm \cdot a_i a_i^\intercal  - I_{n \times n} \right \| \leq 0.04$$
with probability at least $1 - 2 e^{-\gamma | \Omegac|} \geq 1 - 2 e^{-\frac{\gamma}{2}  m} $.    The number of subsets of $m$ with size $\floor{\delta m}$ is  
$$
{m \choose \floor{\delta m}} \leq \left( \frac{e m}{\delta m} \right ) ^{\delta m} = \left[ \left( \frac{e}{\delta} \right)^\delta \right]^m.
$$
As $\lim_{\delta \to 0} \left( \frac{e}{\delta} \right)^\delta = 1$, for sufficiently small $\delta$, $\left( \frac{e}{\delta} \right)^\delta \leq e^{\frac{\gamma}{4} }$. Thus, by a union bound, the desired estimate holds simultaneously for all $|\Omega| \leq \delta m$ with probability at least
$$
1 - \sum_{k=1}^{\floor{\delta m}} {m \choose k} 2 e^{-\frac{\gamma}{2} m}  \geq 1 - \floor{\delta m} {m \choose \floor{\delta m}}  2 e^{-\frac{\gamma}{2}  m} \geq 1 - 2 m e^{-\frac{\gamma}{4} m}.
$$
\end{proof}

The next two lemmas establish probabilistic concentration estimates on the quantity $\sum_{i \in \Omega} |\aih|$ $=\| A_\Omega h \|_1$ for arbitrarily  selected small subsets $\Omega$, where $A$ is a matrix with i.i.d. standard Gaussian entries. 

\begin{lemma} \label{lem:aiho-concentration-k}
Let $A$ be a $k \times n$ matrix with i.i.d. $\mathcal{N}(0,1)$ entries and fix $C>0$.  If $k \geq n$, then on an event of probability at least $1 - 2 e^{- C^2 k /2}$, the following holds simultaneously for all $h \in \R^n$:
$$
\| A h\|_1 \leq (C+2) k \| h\|_2.
$$
\end{lemma}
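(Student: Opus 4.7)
The plan is to reduce the $\ell_1$ bound to an $\ell_2$ operator-norm bound on $A$ and then invoke a standard Gaussian concentration inequality.

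First, by the Cauchy--Schwarz inequality applied entrywise to the vector $Ah \in \R^k$, one has
\begin{align*}
\| A h \|_1 \leq \sqrt{k}\, \| A h\|_2 \leq \sqrt{k}\, \sigmax(A)\, \| h \|_2
\end{align*}
simultaneously for all $h \in \R^n$, so it suffices to control $\sigmax(A)$.

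Second, apply a standard concentration estimate for the largest singular value of a Gaussian matrix (e.g., Corollary 5.35 in \cite{V2012}): for every $t \geq 0$,
\begin{align*}
\sigmax(A) \leq \sqrt{k} + \sqrt{n} + t
\end{align*}
holds with probability at least $1 - 2 e^{-t^2/2}$. Using the assumption $k \geq n$, we have $\sqrt{n} \leq \sqrt{k}$, and choosing $t = C \sqrt{k}$ yields $\sigmax(A) \leq (C+2)\sqrt{k}$ on an event of probability at least $1 - 2 e^{-C^2 k / 2}$.

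Combining the two displays gives $\| A h\|_1 \leq (C+2) k \| h \|_2$ simultaneously for all $h \in \R^n$ on that event. The only real step is invoking the Gaussian operator-norm concentration; everything else is elementary, so there is no substantive obstacle beyond matching the probability bound to the prescribed form $1 - 2 e^{-C^2 k/2}$, which is exactly what setting $t = C\sqrt{k}$ achieves.
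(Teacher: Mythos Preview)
Your proof is correct and essentially identical to the paper's: both bound $\|Ah\|_1 \le \sqrt{k}\,\|Ah\|_2 \le \sqrt{k}\,\|A\|\,\|h\|_2$ and then control $\|A\|$ via the Gaussian singular-value concentration bound $\|A\|\le\sqrt{k}+\sqrt{n}+t$ with $t=C\sqrt{k}$. The only cosmetic differences are the order of the two steps and the corollary number cited from \cite{V2012}.
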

\begin{proof}
 By Corollary 5.25 in \cite{V2012}, the spectral norm of $A$ is bounded by 
\begin{align*}
\| A \| \leq \sqrt{k} + \sqrt{n} + t \quad \text{ with probability at least } 1 - 2 e^{-t^2/2}.
\end{align*}
By $k \geq n$, and by taking $t  = C\sqrt{k}$, we have $\| A \| \leq (C+2) \sqrt{k}$ on an event with probability at least $1 - 2 e^{-C^2 k / 2}$. 
On this event, we conclude
\begin{align*}
\|A h\|_1 \leq \sqrt{k} \|Ah\|_2 \leq \sqrt{k} \|A\| \|h\|_2 \leq (C+2) k \|h\|_2.
\end{align*}
\end{proof}

\begin{lemma} \label{lem:aiho-concentration-uniformomega}
Fix $\delta >0$.  Let $A$ be an $m \times n$ matrix with i.i.d. $\mathcal{N}(0,1)$ entries.  If $m \geq \frac{1}{\delta} n$, then on an event of probability at least $1 - 2 e^{-\delta m} $, the following holds simultaneously for all $\Omega \subset [m]$ with $|\Omega| \leq \delta m$ and for all $h \in \R^n$: 
$$
\| A_{\Omega} h\|_1 \leq \Bigl(\sqrt{4+2\log(1/\delta)}+2 \Bigr) \delta m \| h\|_2.
$$
\begin{proof}
Without loss of generality, it suffices to only consider $\Omega$ such that  $|\Omega| = \floor{\delta m} \geq n$. Let $C = \sqrt{4 + 2\log(1/\delta)}$. By Lemma \ref{lem:aiho-concentration-k}, with probability at least $1 - 2 e^{-C^2 \floor{\delta m}/2}$, we have $\| A_\Omega h\|_1 \leq (C+2)\delta m \|h\|_2.$  The result follows by a union bound and by noting that
\begin{align*}
1 - {m \choose \floor{\delta m}}  2 e^{-C^2 \delta m/2} & \geq  1 - \left( \frac{e m}{\delta m} \right ) ^{\delta m} 2 e^{-C^2 \delta m/2}\\
& \geq  1 - e^{\delta m(1 + \log(1/\delta))}2 e^{-C^2 \delta m/2}\\
& \geq  1 - 2 e^{ \delta m( 1 + \log(1/\delta) -\frac{1}{2}C^2)}\\
& \geq  1 - 2 e^{ -\delta m}.
\end{align*}
\end{proof}

\end{lemma}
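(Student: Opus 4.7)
The plan is to reduce to a fixed subset size, apply Lemma \ref{lem:aiho-concentration-k} to each fixed subset, and then union-bound over subsets, choosing the constant $C$ in Lemma \ref{lem:aiho-concentration-k} so that the exponential tail beats the combinatorial factor coming from the number of $\Omega$'s.

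First I would observe that it suffices to consider $\Omega$ of the largest allowed size. Concretely, if $\Omega \subset \Omega'$ then $\|A_\Omega h\|_1 \le \|A_{\Omega'} h\|_1$, so bounding $\|A_{\Omega'} h\|_1$ uniformly for every $\Omega'$ with $|\Omega'| = \lfloor \delta m \rfloor$ automatically handles all smaller subsets. The hypothesis $m \ge n/\delta$ gives $\lfloor \delta m \rfloor \ge n$, which is exactly the size assumption needed for Lemma \ref{lem:aiho-concentration-k}.

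Next, for a fixed $\Omega$ with $k := |\Omega| = \lfloor \delta m \rfloor$, apply Lemma \ref{lem:aiho-concentration-k} to the $k \times n$ Gaussian submatrix $A_\Omega$ with a constant $C$ to be chosen, obtaining
\[
\|A_\Omega h\|_1 \le (C+2)\, k\, \|h\|_2 \le (C+2)\,\delta m\, \|h\|_2
\]
for every $h \in \R^n$, on an event of probability at least $1 - 2 e^{-C^2 k/2} \ge 1 - 2 e^{-C^2 \delta m/2 + O(1)}$. Then I would union-bound over the $\binom{m}{\lfloor \delta m \rfloor}$ choices of $\Omega$, using the standard estimate $\binom{m}{\lfloor \delta m \rfloor} \le (e/\delta)^{\delta m} = e^{\delta m (1 + \log(1/\delta))}$. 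This yields a failure probability bounded by
\[
2\, \exp\!\bigl( \delta m (1 + \log(1/\delta)) - \tfrac{1}{2} C^2 \delta m \bigr).
\]

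To match the target $1 - 2 e^{-\delta m}$, I would solve $\tfrac12 C^2 \ge 2 + \log(1/\delta)$, i.e.\ set $C = \sqrt{4 + 2\log(1/\delta)}$, which collapses the exponent to $-\delta m$ and produces the stated prefactor $\sqrt{4+2\log(1/\delta)}+2$ in the bound. The main obstacle is just this bookkeeping: the combinatorial $\binom{m}{\lfloor \delta m \rfloor}$ grows like $e^{\Theta(\delta m \log(1/\delta))}$ as $\delta \to 0$, which is why $C$ must scale like $\sqrt{\log(1/\delta)}$ rather than being an absolute constant; everything else is a direct reuse of Lemma \ref{lem:aiho-concentration-k}.
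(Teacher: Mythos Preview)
Your proposal is correct and follows essentially the same argument as the paper: reduce to subsets of maximal size $\lfloor \delta m\rfloor$, apply Lemma~\ref{lem:aiho-concentration-k} to each fixed subset, union-bound using $\binom{m}{\lfloor \delta m\rfloor}\le (e/\delta)^{\delta m}$, and choose $C=\sqrt{4+2\log(1/\delta)}$ so the exponent collapses to $-\delta m$. Your explicit monotonicity justification $\Omega\subset\Omega'\Rightarrow\|A_\Omega h\|_1\le\|A_{\Omega'}h\|_1$ is exactly the reason behind the paper's ``without loss of generality,'' and the minor floor-versus-$\delta m$ slack you flag with $O(1)$ is present in the paper as well.
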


The final three lemmas establish probabilistic concentration lower-bounds on the average of   $\inddefwithnorm \cdot |\aixo | |\aixone |$ over arbitrarily  selected subsets $\Omegac$.

\begin{lemma}  \label{lemma:lower-bound-ind}
Fix $\xo \in \R^n$.  Let $a_i \sim \mathcal{N}(0, I_{n \times n})$ be independent for $i = 1 \ldots m$.  There exist constants $c_0, \gamma$ such that if $m \geq c_0 n$, then with probability at least $1-5 e^{-\gamma m}$,
\begin{align}
\frac{1}{m} \sum_{i=1}^m \inddefwithnorm \cdot |\aixo | |\aixone | \geq 0.55 \| x_0\|_2 \|x_1\|_2 \quad \text{ for all } x_1 \in \R^n. \label{eq:aiaixx-lower-bound}
\end{align}
\end{lemma}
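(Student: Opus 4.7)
The plan is to reduce to the unit sphere by homogeneity, establish a uniform lower bound on the population expectation, obtain pointwise concentration, and extend to all $x_1 \in \R^n$ by an $\eps$-net argument with a Lipschitz remainder estimate. By $1$-homogeneity of \eqref{eq:aiaixx-lower-bound} in each of $\xo$ and $x_1$, I may assume $\|\xo\|_2 = 1$ and restrict attention to $\|x_1\|_2 = 1$; by rotational invariance of the Gaussian, take $\xo = e_1$, so that $\aixo = a_{i,1}$ and $\aixone = c\, a_{i,1} + \sqrt{1-c^2}\, b_i$, where $c := \inner{\xo}{x_1} \in [-1,1]$ and $b_i \sim \mathcal{N}(0,1)$ is independent of $a_{i,1}$.

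The key step is to show $f(c) := \expect\big[\mathbbm{1}_{|z_1|\leq 3}|z_1|\,|c z_1 + \sqrt{1-c^2}\,z_2|\big] \geq 0.629$ uniformly for $c \in [-1,1]$, where $z_1,z_2$ are independent standard normals. Integrate out $z_2$ first, using $\expect_{z_2}|c z_1 + \sqrt{1-c^2}\,z_2| = \sqrt{1-c^2}\,h(|cz_1|/\sqrt{1-c^2})$ with $h(r) = r\,\mathrm{erf}(r/\sqrt{2}) + \sqrt{2/\pi}\,e^{-r^2/2}$, and exploit the sharpened inequality $h(r) \geq \sqrt{2/\pi}\sqrt{1+r^2}$, which is verified by checking that $r \mapsto h(r)/\sqrt{1+r^2}$ has non-negative derivative on $r \geq 0$ and equals $\sqrt{2/\pi}$ at $r=0$. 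This reduces the bound to $f(c) \geq \sqrt{2/\pi}\,G(c^2)$, where $G(t) := \expect[\mathbbm{1}_{|z_1|\leq 3}|z_1|\sqrt{1 + t(z_1^2 - 1)}]$. Since the integrand is a concave function of $t \in [0,1]$ for each fixed $z_1$, so is $G$; hence $G(c^2) \geq \min(G(0), G(1)) = G(0) = \expect[\mathbbm{1}_{|z_1|\leq 3}|z_1|] = \sqrt{2/\pi}(1 - e^{-9/2})$, yielding $f(c) \geq (2/\pi)(1 - e^{-9/2}) \approx 0.629$.

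For pointwise concentration, fix a unit vector $x_1$ and set $Y_i := \inddefwithnorm\,|\aixo|\,|\aixone|$. Since $\inddefwithnorm|\aixo| \leq 3$ deterministically and $\aixone \sim \mathcal{N}(0,1)$, each $Y_i$ is sub-Gaussian with a universal proxy, so a standard Hoeffding/Bernstein bound gives $\PP\big(\frac{1}{m}\sum_i Y_i \leq f(c) - 0.04\big) \leq 2 e^{-\gamma_0 m}$ for some constant $\gamma_0 > 0$. Take an $\eps$-net $\mathcal{N}_\eps \subset S^{n-1}$ of cardinality $\leq (3/\eps)^n$ and apply a union bound: the estimate holds for all $x_1 \in \mathcal{N}_\eps$ simultaneously with probability at least $1 - 2(3/\eps)^n e^{-\gamma_0 m}$, which for $\eps$ a small universal constant and sufficiently large $c_0$ in $m \geq c_0 n$ is at least $1 - 2e^{-\gamma m}$. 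For an arbitrary unit $x_1$ with nearest net point $x_1'$, satisfying $\|x_1 - x_1'\|_2 \leq \eps$,
\[
\Big|\tfrac{1}{m}\sum_i Y_i(x_1) - \tfrac{1}{m}\sum_i Y_i(x_1')\Big| \;\leq\; \tfrac{3}{m}\sum_i |\inner{a_i}{x_1 - x_1'}| \;\leq\; 3(C+2)\eps,
\]
on the event of Lemma~\ref{lem:aiho-concentration-k} applied with $\Omega = [m]$, which has probability at least $1 - 2e^{-C^2 m/2}$. Choosing $\eps$ small enough that the Lipschitz slack plus the concentration slack stays below $0.629 - 0.55 = 0.079$ closes the argument. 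The principal obstacle is the uniform lower bound on $f(c)$: naive subgradient inequalities only deliver $f(c) \geq \max(0.77|c|,\,0.63\sqrt{1-c^2})$, which dips below $0.55$ in the intermediate range $|c| \approx 0.6$, so the conditioning-plus-concavity device above is essential.
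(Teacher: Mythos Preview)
Your argument is correct and follows the same overall scaffold as the paper: normalize by homogeneity, establish a uniform lower bound on the single-sample expectation, concentrate via a sub-Gaussian Hoeffding bound, and pass to all $x_1$ by an $\eps$-net together with the Lipschitz estimate $\tfrac{1}{m}\sum_i \ind\,|\aixo|\,|\langle a_i,h\rangle|\le \tfrac{3}{m}\sum_i|\langle a_i,h\rangle|$ controlled by Lemma~\ref{lem:aiho-concentration-k}. The concentration and net steps are essentially identical to the paper's.

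The genuine difference is in the expectation bound, the paper's Lemma~\ref{lemma:expectation-aiaixx}. The paper weakens the indicator $\mathbbm{1}_{|a_1|\le 3}$ to $\mathbbm{1}_{\sqrt{a_1^2+a_2^2}\le 3}$, passes to polar coordinates, and integrates explicitly to obtain $\tfrac{2-11e^{-9/2}}{\pi}\bigl(|\sin\theta|+\arcsin(\cos\theta)\cos\theta\bigr)\ge 0.597$. You instead integrate out $z_2$ first, invoke the inequality $h(r)\ge\sqrt{2/\pi}\sqrt{1+r^2}$ for the half-normal mean $h(r)=r\,\mathrm{erf}(r/\sqrt{2})+\sqrt{2/\pi}\,e^{-r^2/2}$ (your monotonicity check of $h(r)/\sqrt{1+r^2}$ is valid since $h'(r)=\mathrm{erf}(r/\sqrt{2})$ makes the numerator $\mathrm{erf}(r/\sqrt{2})-r\sqrt{2/\pi}\,e^{-r^2/2}$, which is nonnegative), and then use concavity of $t\mapsto\sqrt{1+t(z_1^2-1)}$ to reduce to the endpoints $t=0,1$. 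This yields $(2/\pi)(1-e^{-9/2})\approx 0.629$, which is sharper than the paper's $0.597$ and in fact equals the true expectation at $c=0$, so your bound is tight at the worst angle. The paper's route is more hands-on but self-contained; yours is cleaner and loses nothing to the indicator relaxation, at the cost of the auxiliary inequality on $h$. Either suffices comfortably for the target constant $0.55$.
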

\begin{proof} 
Without loss of generality, we take $\|x_0\|_2 = \|x_1\|_2 = 1$. 
Define $\Axo: \R^{n} \to \R^m$ by $\Axo(x_1) = (\inddef \cdot \aixo \aixone)_{i=1}^m$.  Note that $\Axo$ is linear and the left hand side of \eqref{eq:aiaixx-lower-bound} is $\frac{1}{m} \| \Axo (x_1) \|_1$. 
 
First, we show that for any fixed unit vector $x_1$, 
$\frac{1}{m} \| \Axo (x_1)\|_1 \geq 0.59$
with probability at least $1 - 3 e^{-\gamma m }$.  To show this, observe that $\frac{1}{m}\|\Axo(x_1)\|_1 = \frac{1}{m} \sum_{i=1}^m \xi_i$, where $\xi_i = \inddef \cdot | \langle a_i, x_0\rangle| |\langle a_i, x_1\rangle| $ is a sub-gaussian random variable.  Let $K$ be the sub-gaussian norm of $\xi_i - \mathbb{E}\xi_i$.  By Lemma~\ref{lemma:expectation-aiaixx}, $\mathbb{E} \xi_i \geq 0.597$.     Thus, by a Hoeffding-type inequality  (Proposition 5.10  in \cite{V2012}), for some $c>0$,
$$
\PP\Bigl( \frac{1}{m} \| \Axo(x_1)\|_1 \geq 0.59 \Bigr) \geq 1- e \cdot e^{-cm/K^2} \geq 1 - 3 e^{-\gamma m},
$$
for some $\gamma >0$.

Second, we show that the inequality in \eqref{eq:aiaixx-lower-bound} holds simultaneously for all $x_1\in\R^n$.  Let $\mathcal{N}_\eps$ be an $\eps$-net of $S^{n-1}$, where $\eps$ will be specified below.  By Lemma 5.2 in \cite{V2012}, we may take $| \mathcal{N}_\eps| \leq \bigl(1 + \frac{2}{\eps} \bigr)^n$.   For any $x_1$, there exists an $\xtilde_1 \in \mathcal{N}_\eps$ such that 
$
\| x_1- \xtilde_1\|_2
\leq  \eps.
$
Because $\Axo(x_1) = \Axo( \xtilde_1) + \Axo(x_1  - \xtilde_1)$, we have
$$
\frac{1}{m}\|\Axo( x_1)\|_1 \geq \frac{1}{m} \| \Axo( \xtilde_1) \|_1 - \frac{1}{m}\|\Axo( x_1 - \xtilde_1)\|_1.
$$
On the intersection of the events 
\begin{align*}
E_1 &= \left \{\frac{1}{m} \|\Axo(x_1) \|_1 \geq 0.59, \text{ for all }  x_1 \in \mathcal{N}_\eps \right \},\\
E_2 &= \left \{\frac{1}{m}\| \Axo(h) \|_1 \leq 9 \|h\|_2, \text{ for all } h \in \R^{n} \right \},
\end{align*}
we have
\begin{align*}
\frac{1}{m}\|\Axo( x_1)\|_1 \geq 0.59 - 9 \| x_1 - \xtilde_1\|_2 \geq (0.59 - 9 \eps)  = 0.55,
\end{align*}
by choosing $9 \eps = 0.04$.  
It remains to estimate the probability of $E_1\cap E_2$.   We have by a union bound that $$\PP(E_1) \geq 1 - | \mathcal{N}_\eps | \cdot 3 e^{-\gamma m} \geq 1 - 3 \left ( 1 + \frac{2}{\eps}\right )^{n} e^{- \gamma m}.$$  If $m \geq c_0 n$ for a sufficiently large $c_0$,  then $\PP(E_1) \geq 1 - 3 e ^{-\gamma m/2}$.  
To bound $\PP(E_2)$, note that
\begin{align*}
\frac{1}{m} \| \Axo(x_1) \|_1  = \frac{1}{m} \sum_{i=1}^m \inddef \cdot |\aixo| |\aixone| \leq \frac{3}{m} \sum_{i=1}^m | \aixone|. 
\end{align*}
By Lemma \ref{lem:aiho-concentration-k} with $C=1$ and by $m \geq n$, there is an event $E_3$, with $\PP(E_3) \geq 1 - 2 e^{-m/2}$, on which $\frac{1}{m} \sum_{i=1}^m | \aixone| \leq 3 \| x_1 \|_2$ for all $x_1 \in \R^n$.  Notice that $E_2 \subset E_3$, and hence $\PP(E_2) \geq 1 - 2 e^{-m/2}$.  
Thus, $\PP(E_1 \cap E_2) \geq 1 - 5 e^{-\gamma m}$ for some constant $\gamma$.

\end{proof}

\begin{lemma}  \label{lemma:lower-bound-ind-uniformomega}
Fix $\xo \in \R^n$.  Let $a_i \sim \mathcal{N}(0, I_{n \times n})$ be independent for $i = 1 \ldots m$.  There exist positive constants $c_0, \gamma, \delta$ such that if $m \geq 2 c_0 n$, then with probability at least $1-5 m e^{-\gamma m/4}$, the following holds simultaneously for all $\Omega \subset [m]$ such that $|\Omega| \leq \delta m$:
\begin{align}
\frac{1}{|\Omegac|} \sum_{i\in\Omegac} \inddefwithnorm \cdot |\aixo | |\aixone | \geq 0.55 \| x_0\|_2 \|x_1\|_2 \quad \text{ for all } x_1 \in \R^n. \label{eq:aiaixx-lower-bound-uniformomega}
\end{align}
\end{lemma}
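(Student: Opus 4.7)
The plan is to mimic the argument of Lemma \ref{lem:isometry-allomega-withind}: establish the statement for each fixed subset $\Omega$ of appropriate cardinality by appealing to Lemma \ref{lemma:lower-bound-ind}, and then union bound over all such subsets. The key observation enabling this is that the bound from Lemma \ref{lemma:lower-bound-ind} is uniform over $x_1\in\R^n$, so once we fix $\Omegac$ we can apply it directly to the sum indexed by $\Omegac$.

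First I would fix $\Omega \subset [m]$ with $|\Omega| = k \leq \delta m$ and set $\delta < 1/2$ so that the assumption $m \geq 2 c_0 n$ (with $c_0$ taken from Lemma \ref{lemma:lower-bound-ind}) yields $|\Omegac| \geq m/2 \geq c_0 n$. Applying Lemma \ref{lemma:lower-bound-ind} to the $|\Omegac|$ i.i.d.\ Gaussian vectors $\{a_i\}_{i\in\Omegac}$, we obtain
$$
\frac{1}{|\Omegac|} \sum_{i\in\Omegac} \inddefwithnorm \cdot |\aixo||\aixone| \geq 0.55\|\xo\|_2\|x_1\|_2 \quad\text{for all } x_1 \in \R^n,
$$
on an event of probability at least $1 - 5 e^{-\gamma |\Omegac|} \geq 1 - 5 e^{-\gamma m/2}$.

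Second, I would union bound over all $\Omega \subset [m]$ with $|\Omega| \leq \delta m$. The count of such subsets is bounded by $\floor{\delta m}\binom{m}{\floor{\delta m}} \leq \floor{\delta m}\left(\frac{e}{\delta}\right)^{\delta m}$. Using the same device as in Lemma \ref{lem:isometry-allomega-withind}, we choose $\delta$ small enough that $\left(\frac{e}{\delta}\right)^\delta \leq e^{\gamma/4}$, so that the union bound over all such $\Omega$ fails with probability at most
$$
\floor{\delta m}\left(\frac{e}{\delta}\right)^{\delta m} \cdot 5 e^{-\gamma m/2} \leq 5\floor{\delta m}\, e^{\gamma m/4}\, e^{-\gamma m/2} \leq 5 m\, e^{-\gamma m/4},
$$
giving the desired conclusion simultaneously for all $\Omega$ with $|\Omega|\leq \delta m$.

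The main obstacle is simply bookkeeping: making sure the chosen $\delta$ is small enough both for the entropy-vs-probability tradeoff in the union bound and for $\delta < 1/2$ (which guarantees $|\Omegac| \geq m/2$). Since Lemma \ref{lemma:lower-bound-ind} already gives uniformity over $x_1$ on each event, there is no additional net argument to run at this stage, and the proof reduces to the combinatorial bound above.
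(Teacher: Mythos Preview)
Your proposal is correct and follows essentially the same approach as the paper: apply Lemma \ref{lemma:lower-bound-ind} to each fixed $\Omegac$ (using $\delta<1/2$ so that $|\Omegac|\geq m/2\geq c_0 n$), then union bound over all $\Omega$ of size at most $\delta m$, choosing $\delta$ small enough that $(e/\delta)^\delta\leq e^{\gamma/4}$ to absorb the entropy term. The paper's proof is almost line-for-line the same computation.
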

\begin{proof}
Let $c_0, \gamma$ be the constants from Lemma \ref{lemma:lower-bound-ind}.   For any fixed $\Omega$ such that $|\Omega| \leq \delta m$, and for any $\delta < 1/2$, we have \eqref{eq:aiaixx-lower-bound-uniformomega} with probability at least $1 - 5 e^{-\gamma |\Omegac|} \geq 1 - 5 e^{-\gamma m / 2}$.  By a union bound and by selecting $\delta$ small enough such that $(e/\delta)^\delta < e^{\gamma/4}$, the result holds with probability at least
\begin{align*}
1 - \floor{\delta m} {m \choose \floor{\delta m}}  5 e^{-\gamma m/2} & \geq  1 - \delta m \left( \frac{e m}{\delta m} \right ) ^{\delta m} 5  e^{-\gamma m/2} \geq 1 - 5 m e^{-\gamma m/4}.
\end{align*}
\end{proof}

\begin{lemma}  \label{lemma:expectation-aiaixx}
Fix $x_0, x_1 \in \mathbb{R}^n$.  If $a \sim \mathcal{N}(0, I_{n \times n})$, then 
$\mathbb{E} \Bigl[ \inddef \cdot |\langle a, x_0\rangle| | \langle a, x_1 \rangle |  \Bigr]  \geq 0.597 \|x_0\|_2 \|x_1\|_2.$
\end{lemma}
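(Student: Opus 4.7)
The plan is to reduce to a scalar calculation parametrized by the correlation $\rho := \langle x_0, x_1\rangle/(\|x_0\|_2\|x_1\|_2) \in [-1,1]$, then combine the classical closed form for the absolute first moment of a bivariate Gaussian with a small tail correction.

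By homogeneity both sides of the inequality scale as $\|x_0\|_2 \|x_1\|_2$, so I normalize $\|x_0\|_2 = \|x_1\|_2 = 1$; the degenerate case is trivial. By rotational invariance of $a$, I may assume $x_0, x_1$ lie in a common $2$-dimensional plane, and write $g := \langle a, x_0\rangle$ and $\langle a, x_1\rangle = \rho g + \sigma h$ with $\sigma = \sqrt{1-\rho^2}$ and $h$ independent of $g$, both $\mathcal{N}(0,1)$. The substitution $h \mapsto -h$ shows the target expectation is even in $\rho$, so I restrict to $\rho \in [0,1]$.

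The key ingredient is the classical identity
\begin{align*}
\mathbb{E}\bigl[|g|\cdot|\rho g + \sigma h|\bigr] \;=\; \frac{2}{\pi}\bigl(\sigma + \rho \arcsin\rho\bigr),
\end{align*}
which one obtains by conditioning on $g$ and using $\mathbb{E}_h|\rho g + \sigma h| = \sigma\sqrt{2/\pi}\, e^{-\rho^2 g^2/(2\sigma^2)} + |\rho g|\,\mathrm{erf}(|\rho g|/(\sigma\sqrt{2}))$, then integrating against $|g|\phi(g)$. Since $\frac{d}{d\rho}(\sigma + \rho \arcsin \rho) = \arcsin\rho \geq 0$, the right-hand side is minimized at $\rho = 0$ with value $1$, yielding the uniform bound $\mathbb{E}[|g||\rho g + \sigma h|] \geq 2/\pi$ for every $\rho \in [0,1]$.

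To pass to the truncated expectation I bound the tail on $\{|g|>3\}$ by the triangle inequality $|\rho g + \sigma h| \leq |\rho g| + \sigma |h|$ combined with independence of $g$ and $h$:
\begin{align*}
\mathbb{E}\bigl[\mathbbm{1}_{|g|>3}\,|g|\,|\rho g + \sigma h|\bigr] \;\leq\; \rho\, \mathbb{E}[\mathbbm{1}_{|g|>3}\, g^2] + \sigma\sqrt{2/\pi}\, \mathbb{E}[\mathbbm{1}_{|g|>3}\, |g|].
\end{align*}
Both one-dimensional Gaussian tail moments admit clean closed forms, namely $\mathbb{E}[\mathbbm{1}_{|g|>3}\,|g|] = 2\phi(3)$ and $\mathbb{E}[\mathbbm{1}_{|g|>3}\,g^2] = 6\phi(3) + 2(1-\Phi(3))$ (integration by parts). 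Using $\rho,\sigma \leq 1$ and $\phi(3) \approx 0.00443$, $1-\Phi(3) \approx 0.00135$, the tail is at most roughly $0.0293 + 0.0071 < 0.037$. Subtracting the tail from the unindicator expectation yields $\mathbb{E}[\mathbbm{1}_{|g|\leq 3}\,|g|\,|\rho g + \sigma h|] \geq 2/\pi - 0.037 > 0.600 > 0.597$ uniformly in $\rho$, as required.

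The only nontrivial piece is the closed form for $\mathbb{E}|XY|$ on the bivariate Gaussian; once that and its monotonicity in $\rho$ are in hand, the rest is arithmetic with standard Gaussian tail integrals.
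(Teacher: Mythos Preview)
Your argument is correct and arrives at a slightly sharper constant (about $0.600$) than the paper's (about $0.598$). The two proofs share the same core formula and the same minimization step, but differ in how the truncation $\{|g|\le 3\}$ is handled.

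The paper works in polar coordinates in the two-dimensional span of $x_0,x_1$. It replaces the slab indicator $\mathbbm{1}_{|a_1|\le 3}$ by the smaller disk indicator $\mathbbm{1}_{\sqrt{a_1^2+a_2^2}\le 3}$, after which the radial and angular integrals separate cleanly: the radial part contributes $2-11e^{-9/2}$, and the angular part evaluates to exactly $\frac{2}{\pi}\bigl(|\sin\theta|+\cos\theta\,\arcsin(\cos\theta)\bigr)$, which is precisely your $\frac{2}{\pi}(\sigma+\rho\arcsin\rho)$ with $\rho=\cos\theta$. Both proofs then minimize this expression at $\rho=0$ to get the factor $1$.

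You instead start from the untruncated identity $\mathbb{E}[|g|\,|\rho g+\sigma h|]=\frac{2}{\pi}(\sigma+\rho\arcsin\rho)$ as a known closed form and subtract an upper bound on the tail $\{|g|>3\}$ via the triangle inequality and explicit one-dimensional Gaussian tail moments. The paper's disk trick is more self-contained---it derives the bivariate absolute-moment formula in situ rather than quoting it---while your decomposition is more modular: the tail piece is isolated and could be sharpened independently (for instance, using $\rho^2+\sigma^2=1$ instead of $\rho,\sigma\le 1$ would already improve your tail bound to about $0.030$).
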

\begin{proof}
Without loss of generality, take $\|x_0\|_2=\|x_1\|_2 = 1$.  Further, without loss of generality, take $x_0 = e_1$ and $x_1 = \cos \theta \ e_1 + \sin \theta \ e_2$, where $\theta$ is the angle between $\xo$ and $x_1$.  The desired expected value is 
\begin{align*}
\mathbb{E} \Bigl[ \mathbbm{1}_{|a_1| \leq 3} \cdot  \Big |a_1 (a_1 \cos \theta + a_2 \sin \theta) \Big | \Bigr]  
&\geq \mathbb{E} \Bigl[ \mathbbm{1}_{\sqrt{a_1^2 +a_2^2}  \leq 3} \cdot \Big|a_1 (a_1 \cos \theta + a_2 \sin \theta) \Big | \Bigr]  \\
&= \frac{1}{2 \pi} \int_0^3 r^3 e^{-r^2/2} dr \int_0^{2\pi} |\cos \phi \cos(\theta - \phi)| d\phi \\
&= \frac{1}{2 \pi} (2 - 11 e^{-9/2}) \int_0^{2\pi} |\cos \phi \cos(\theta - \phi)| d\phi \\
&=\frac{2 - 11 e^{-9/2}} {2 }\frac{1}{2\pi} \int_0^{2\pi}|\cos \theta + \cos(2 \phi - \theta)| d\phi\\
&=\frac{2 - 11 e^{-9/2}} {2 }\frac{1}{\pi} \int_0^\pi |\cos \theta + \cos \tilde{\phi}| d\tilde{\phi}\\
&=\frac{2 - 11 e^{-9/2}} {\pi }  \left(|\sin \theta| + \sin^{-1} (\cos \theta) \cos \theta \right)\\
&\geq \frac{2 - 11 e^{-9/2}} {\pi } \geq 0.597,
\end{align*}
where the third equality is because $2 \cos \phi \cos(\theta - \phi) = \cos \theta + \cos(2 \phi - \theta)$. 
\end{proof}

%

\subsection*{Acknowledgements}
PH acknowledges funding by the grant NSF DMS-1464525. 
\bibliographystyle{plain}
\bibliography{refs}

\end{document}